\pgfplotsset{width=7cm,compat=1.9}
\renewcommand{\cal}[1]{\mathcal{#1}}
\newcommand{\microspace}{\mspace{.5mu}} \newcommand{\ket}[1]{\ensuremath{\lvert\microspace #1
    \microspace\rangle}} 
\newcommand{\paren}[1]{(#1)}
\newcommand{\class}[1]{\mathsf{#1}} \newcommand{\MIP}{\class{MIP}} \newcommand\MIPco{\ensuremath{\class{MIP}^\class{co}}} \newcommand{\RE}{\class{RE}} \newcommand{\MIPzero}{\mathsf{MIP}^*_0}
\newcommand{\MIPstar}{\mathsf{MIP}^*}
\newcommand{\PItwo}{\Pi_2^0}
\newcommand{\coRE}{\mathsf{coRE}}
\DeclareMathOperator{\HALT}{\mathrm{Halt}}
\DeclareMathOperator{\COMPRESS}{\mathrm{Compress}}
\newcommand{\C}{\mathbb{C}}
\newcommand{\N}{\mathbb{N}}
\let\epsilon=\varepsilon
\newtheorem{theorem}{Theorem}  \newtheorem{proposition}[theorem]{Proposition} \newtheorem{corollary}[theorem]{Corollary} \newtheorem{definition}[theorem]{Definition}   \newtheorem{claim}[theorem]{Claim}   
\newenvironment{gamespec}{
  \begin{mdframed}[style=figstyle]}{
  \end{mdframed}}
\begin{document}

\title{On the complexity of zero gap $\MIP^*$}

\author[]{Hamoon Mousavi}
\author[]{Seyed Sajjad Nezhadi}
\author[]{Henry Yuen}

\affil[]{Department of Computer Science, University of Toronto, Toronto, Canada. \authorcr
  \{\tt hmousavi@cs.toronto.edu, sajjad.nezhadi@mail.utoronto.ca, hyuen@cs.toronto.edu\}}

\date{}

\maketitle
\begin{abstract}
The class $\mathsf{MIP}^*$ is the set of languages decidable by multiprover interactive proofs with quantum entangled provers. It was recently shown by Ji, Natarajan, Vidick, Wright and Yuen that $\mathsf{MIP}^*$ is equal to $\mathsf{RE}$, the set of recursively enumerable languages. In particular this shows that the complexity of approximating the quantum value of a non-local game $G$ is equivalent to the complexity of the Halting problem. 

In this paper we investigate the complexity of deciding whether the quantum value of a non-local game $G$ is \emph{exactly} $1$. This problem corresponds to a complexity class that we call \emph{zero gap} $\mathsf{MIP}^*$, denoted by $\mathsf{MIP}^*_0$, where there is no promise gap between the verifier's acceptance probabilities in the YES and NO cases. We prove that $\mathsf{MIP}^*_0$ extends beyond the first level of the arithmetical hierarchy (which includes $\mathsf{RE}$ and its complement $\mathsf{coRE}$), and in fact is equal to $\Pi_2^0$, the class of languages that can be decided by quantified formulas of the form $\forall y \, \exists z \, R(x,y,z)$.

Combined with the previously known result that $\mathsf{MIP}^{co}_0$ (the \emph{commuting operator} variant of $\mathsf{MIP}^*_0$) is equal to $\mathsf{coRE}$,  our result further highlights the fascinating connection between various models of quantum multiprover interactive proofs and different classes in computability theory.
\end{abstract}
 \section{Introduction}

A two-player \emph{non-local game} is played between a verifier and two cooperating players named Alice and Bob who cannot communicate with each other once the game starts. During the game, the verifier samples a pair of questions $(x,y)$ from a joint distribution $\mu$, sends $x$ to Alice and $y$ to Bob, who respond with answers $a$ and $b$ respectively. The verifier accepts if and only if $D(x,y,a,b) = 1$ for some predicate $D$. The \emph{quantum value} of a non-local game $G$, denoted by $\omega_q(G)$, is defined to be the supremum of the verifier's acceptance probability over all possible finite dimensional quantum strategies of Alice and Bob for the game $G$. 
            
What is the complexity of computing the quantum value of non-local games? In~\cite{slofstra2019set}, Slofstra proved that the problem of determining whether a given game $G$ has $\omega_q(G) = 1$ is \emph{undecidable}. Recently, it was shown that \emph{approximating} $\omega_q(G)$ up to any additive constant is also an uncomputable problem~\cite{ji2020mip}. In particular, there is a computable reduction from Turing machines $M$ to non-local games $G_M$ such that if $M$ halts (when run on an empty input), then $\omega_q(G_M) = 1$, and otherwise $\omega_q(G_M) \leq \frac{1}{2}$. Since determining whether a given Turing machine halts (i.e. the Halting problem) is undecidable, so is the problem of determining whether the quantum value of a non-local game is $1$ or at most $\frac{1}{2}$. 

Conversely, one can reduce the problem of approximating the quantum value of non-local games to the Halting problem; there is an algorithm that for every non-local game $G$ exhaustively searches over finite-dimensional strategies of increasing dimension to find one that succeeds with probability close to $1$ (above $0.99$, say). If $\omega_q(G) = 1$ then the algorithm is guaranteed to find such a strategy; otherwise if $\omega_q(G) \leq 1/2$ the algorithm will run forever. In complexity-theoretic terms, this shows that the class $\MIP^*$, the set of languages decidable by multiprover interactive proofs with quantum provers, is equal to $\RE$, the set of recursively enumerable languages (i.e. the class for which the Halting problem is complete). 

In this paper, we return to the problem originally investigated by Slofstra~\cite{slofstra2019set}: what is the complexity of deciding if $\omega_q(G)$ is \emph{exactly} equal to $1$ for nonlocal games $G$? This corresponds to the complexity class that we call \emph{zero gap} $\MIP^*$, denoted by $\MIP^*_0$. In this model of interactive proofs, in the YES case (i.e. $x \in L$), there is a sequence of finite-dimensional prover strategies that cause the verifier to accept with probability approaching $1$. In the NO case (i.e. $x \notin L$), all finite-dimensional prover strategies are rejected with positive probability -- but could be arbitrarily close to $0$. 
It is easy to see that $\MIP^* \subseteq \MIP^*_0$ and thus $\MIP^*_0$ contains undecidable languages. Furthermore, we know that $\MIP^*_0$ \emph{cannot} be equal to $\MIP^*$; the results of~\cite{slofstra2019set,Compress2} imply that $\coRE$, the complement of $\RE$, is also contained in $\MIP^*_0$. Since $\RE \neq \coRE$, this implies that $\MIP^*_0$ strictly contains $\MIP^* = \RE$.

What problems can be reduced to the task of exactly computing the quantum value of non-local games, rather than ``just'' approximating it? We characterize the class $\MIP^*_0$ by showing that it is equal to $\PItwo$, a class that belongs to the \emph{arithmetical hierarchy} from computability theory. The arithmetical hierarchy is defined by classes of languages decidable via formulas with alternating quantifiers. For example, the class $\RE$ is equal to the class $\Sigma_1^0$, which is the set of languages $L$ of the form $\{ x : \exists y\ldotp R(x,y) = 1\}$ for some decidable predicate $R$. The class $\coRE$ is equal to $\Pi^0_1$, the set of languages of the form $\{ x : \forall y \ldotp R(x,y) = 1\}$. The class $\PItwo$ is the set of languages $L$ of the form $\{ x : \forall y\ldotp \exists z\ldotp R(x,y,z) = 1 \}$. 

An equivalent definition of the class $\PItwo$ is that it is the set of languages $L$ such that there is a Turing machine $A$ that has \emph{oracle access} to the Halting problem, and $x \notin L$ if and only if $A(x) = 1$. It is known that $\PItwo$ strictly contains $\Sigma_1^0 = \RE$. This shows that $\MIP^*_0$ contains problems that are \emph{harder} (in a computability sense) than the Halting problem.

We specifically show that there exists a computable reduction from $\PItwo$ languages to the problem of deciding whether a \emph{three-player} non-local game $G$ has quantum value $1$. It is likely that a similar reduction holds for two-player non-local games but we leave this for future work. We also show that the problem of deciding if a non-local game has quantum value $1$ can be reduced to a $\PItwo$ language, thus establishing the equality $\MIP^*_0 = \PItwo$. 

This paper, combined with the results of~\cite{ji2020mip} and~\cite{slofstra2019set}, paints a fascinating landscape about the complexity of quantum multiprover interactive proofs, in which there are four different complexity classes to consider. The first two are $\MIP^*$ and $\MIP^*_0$, which we defined already. The second two are $\MIPco$ and its zero-gap variant $\MIPco_0$. The class $\MIPco$ stands for languages that are decidable by quantum multiprover interactive proofs in the \emph{commuting operator} model: here, the provers are allowed to use infinite-dimensional quantum strategies, and the measurement operators of Alice only need to commute with those of Bob (rather than be in tensor product). 

\begin{figure}[H]\label{fig:computability landscape}
\usetikzlibrary{arrows}
\begin{center}
\hspace*{1.6cm} \begin{tikzpicture}[->, node distance=2cm, semithick]
 \node (P) {$\Delta_1^0$};
  \node (Sigma1) [above left of=P]       {$\MIP^* = \Sigma_1^0$ \hspace*{1.3cm}}; 
 \node (Pi1)    [above right of=P]      { \hspace*{2.9cm} $\Pi_1^0 = \MIPco_0 \stackrel{?}{=} \MIPco$};
 \node (Delta2) [above left of=Pi1]     {$\Delta_2^0$};
 \node (Sigma2) [above left of=Delta2]  {$\Sigma_2^0$};
 \node (Pi2)    [above right of=Delta2] {\hspace*{1.3cm} $\mathbf{\PItwo = \MIPzero}$};
 \draw (P)      -> (Sigma1);
 \draw (P)      -> (Pi1);
 \draw (Sigma1) -> (Sigma2);
 \draw (Sigma1) -> (Delta2);
 \draw (Pi1)    -> (Pi2);
 \draw (Pi1)    -> (Delta2);
 \draw (Delta2) -> (Sigma2);
 \draw (Delta2) -> (Pi2);
 \end{tikzpicture}
\caption{The computability landscape of quantum multiprover interactive proofs. Arrows denote inclusion. The set $\Delta_1^0$ denotes the set of all decidable languages. The set $\Sigma_1^0$ denotes the recursively enumerable languages, and $\Pi_1^0$ denotes the set of co-recursively enumerable languages. It is known that $\MIPco \subseteq \MIPco_0$, but unknown whether they are equal.}
\end{center}

\end{figure}

One of the consequences of the fact that $\MIPstar = \RE$ is that $\MIPco \neq \MIPstar$. This is because $\MIPco \subseteq \coRE$, due to the fact that the commuting operator value of a non-local game can be upper-bounded using a convergent sequence of semidefinite programs~\cite{NPA,doherty2008quantum}. It is also the case that $\MIPco_0 \subseteq \coRE$, and in fact equality holds due to~\cite{slofstra2019set,coudron2019complexity}. It remains an open question to determine if $\MIPco = \MIPco_0 = \coRE$. 

There are a number of curious and counter-intuitive aspects about this landscape of complexity for non-local games. First, if $\MIPco = \coRE$, then there would be a pleasing symmetry in that $\MIPstar = \RE$ and $\MIPco = \coRE$ (even though the ``co'' refer to different things on each side of the equation!). On the other hand, we have that $\MIPzero = \Pi^0_2$ and $\MIPco_0 = \coRE$, meaning that -- in the zero gap setting -- there are \emph{more} languages that can be verified with provers using (a limit of) finite-dimensional strategies than can be decided with provers using infinite-dimensional commuting operator strategies! Of course, in the setting of interactive proofs, giving provers access to more resources can change the complexity of the interactive proof model in unexpected ways.

\subsection{Proof overview}

We prove the lower bound $\PItwo \subseteq \MIPzero$ by combining two components: first we leverage the result of~\cite{ji2020mip} that $\MIPstar = \RE$ as a black box, which implies that there is a quantum multiprover interactive proof for the Halting problem. Next, we use a \emph{compression theorem} for quantum multiprover interactive proofs that was proved in~\cite{Compress2}. A compression theorem, roughly speaking, states that given a verifier $V$ for a quantum multiprover interactive protocol (which can be modeled as a Turing machine with tapes to receive/send messages to the provers), one can compute a much more time-efficient verifier $V'$ whose quantum value is related in some predictable way to the quantum value of $V$. Several recent results about the complexity of non-local games crucially rely on proving compression theorems with various properties~\cite{ji2017compression,Compress2,natarajan2019neexp,ji2020mip}. 

In more detail, the compression theorem of~\cite{Compress2} (which in turn is a refinement of the compression theorem of~\cite{ji2017compression}) states that given a description of a verifier $V$, one can compute a description of a three-player\footnote{The results of~\cite{Compress2} are stated for games with $15$ players, but can be improved to hold for $3$-player games by using a different error correcting code in the construction.} non-local game $G_V$ (which is a multiprover protocol with only one round of interaction) whose properties are as follows:
\begin{enumerate}
    \item The time complexity of the verifier in $G_V$ is \emph{polylogarithmic} in the time complexity of $V$.
    \item The quantum value of the protocol executed by $V$ is related to the quantum value of $G_V$ in the following manner:
    \[
        \omega_q(G_V)  \geq \frac{1}{2} + \frac{1}{2} \omega_q(V)
    \]
    and furthermore if $\omega_q(V) < 1$ then $\omega_q(G_V) < 1$.
\end{enumerate}
The utilization of the compression theorem of~\cite{Compress2} is the reason why the main result of this paper holds for three-player non-local games, rather than two.

We call this compression theorem a ``zero gap'' compression theorem, because it does not preserve any promise gap on the value of the input verifier $V$: if the value of $V$ is promised to be either $1$ or $1/2$, then $G_V$ is only guaranteed to have value either $1$ or $3/4$. If we iterate this compression procedure, then we get a promise gap that goes to zero. In contrast, the compression theorem used to prove $\MIPstar = \RE$ \emph{is} gap-preserving.

The zero gap compression theorem was used to prove that $\coRE \subseteq \MIPzero$ in~\cite{Compress2}. At a high level, this is shown by constructing a verifier that recursively calls the zero gap compression procedure on itself. In this paper, we follow this approach, except we also embed an $\MIPstar$ protocol for $\RE$ inside the verifier that is recursively calling the zero gap compression procedure; this composition of protocols allows the verifier to verify languages in $\PItwo$.

\subsection{Further remarks}
 
\paragraph{$\MIP^* = \RE$ is equivalent to gap-preserving compression.} As mentioned, the key to proving $\MIPstar = \RE$~\cite{ji2020mip} was establishing a gap-preserving compression theorem for non-local games, albeit for a special case of non-local games satisfying a so-called ``normal form'' property. In Section~\ref{sec:gap}, we present a relatively simple -- but in our opinion quite interesting -- observation that $\MIPstar = \RE$ is in some sense, \emph{equivalent} to a gap-preserving compression theorem.

\paragraph{A proof of $\MIP^*_0 = \Pi_2^0$ under weaker assumptions?} One might wonder if there might be an elementary way of proving that $\MIP^*_0 = \Pi_2^0$, \emph{without} relying on the statement that $\MIP^* = \RE$. For example, the results of~\cite{slofstra2019set,Compress2} show that $\coRE \subseteq \MIP^*_0$ and furthermore~\cite{slofstra2019set} shows that $\coRE = \MIP^{co}_0$. These previous ``zero-gap results'' do not appear to have the same mathematical consequences as $\MIP^* = \RE$ (e.g. yielding a negative answer to Connes' embedding problem if $\RE \subseteq \MIP^*(2)$, the two-player variant of $\MIP^*$), which suggests the intuition that characterizing the complexity of \emph{exactly} computing the quantum (or commuting operator) value of nonlocal games may be fundamentally easier than characterizing the complexity of \emph{approximating} it. 

This intuition is not entirely correct: the ``zero-gap'' statement $\MIP^*_0 = \Pi_2^0$ is already enough to yield a negative answer to Tsirelson's problem: there exists a $k$ where $k$-partite commuting operator correlations cannot be approximated by finite dimensional correlations. Put another way, if Tsirelson's problem has a positive answer, then the commuting operator and quantum values of games are always equal, and then $\MIP^*_0 = \MIP^{co}_0 = \coRE$. However, $\Pi_2^0$ strictly contains $\coRE$ -- thus Tsirelson's problem has a negative answer. Furthermore, Tsirelson's problem for $k = 2$ is known to be equivalent to Connes' embedding problem~\cite{fritz2012tsirelson,junge2011connes,Ozawa}.

This suggests that our characterization of the class $\MIP^*_0$ must necessarily involve a nontrivial tool such as $\MIP^* = \RE$.

\subsection{Open problems}

We list some open problems.
\begin{enumerate}
    \item Just as the complexity statement $\MIPstar = \RE$ has consequences for questions in pure mathematics (such as the Connes' embedding problem), does the equality $\MIPzero = \PItwo$ have any implications for operator algebras? We believe there may be a connection to model-theoretic approaches to the Connes' embedding problem (see, e.g.,~\cite{goldbring2013computability,goldbring2017enforceable}).
    
    \item What is the complexity of $\MIPco$? Is it equal to $\coRE$?
    
    \item Can the reduction from $\PItwo$ languages to the problem of deciding whether $\omega_q(G) = 1$ be improved to hold for two-player games $G$?
    
    \item We showed that, essentially, $\MIPstar = \RE$ implies a gap-preserving compression theorem. Can one show that it also implies in a black-box fashion, a zero gap compression theorem, of the same kind as proved in~\cite{Compress2}? This then proves that $\MIPstar = \RE$ directly implies $\MIPzero=\PItwo$.
    
    \item Does $\MIPzero = \PItwo$ imply $\MIPstar = \RE$ in a ``black-box'' fashion?
\end{enumerate}

\subsection*{Acknowledgments} 

We thank Matt Coudron, Thomas Vidick, and especially William Slofstra for numerous helpful discussions. We also thank the reviewers of ICALP 2020 for suggestions to improve the presentation. HY was supported by NSERC Discovery Grant 2019-06636. HM was supported by the Ontario Graduate Scholarship (OGS).

 \section{Preliminaries}

We write $\N$ to denote the natural numbers $\{1,2,3,\ldots\}$. All logarithms are base $2$. For a string $x \in \{0,1\}^*$ let $|x|$ denote the length of $x$. We let $$\log^*(n) = 
\begin{cases}
0,   &n \leq 1 \\
1 + \log^*(\log(n)), &n > 1 
\end{cases}$$ 
denote the iterated logarithm function.

\subsection{Turing machines and the arithmetical hierarchy}

A total Turing machine is one that halts on every input. Fix a string encoding of Turing machines, and for a Turing machine $M$, let $|M|$ denote the length of the encoding of $M$.

\begin{proposition}[Universal Turing machine]
\label{prop:universal-tm}
There exists a universal constant $C >0$ and a universal Turing machine $\mathscr{U}$ that, given an input pair $(M,x)$ where $M$ is an encoding of a Turing machine, computes $M(x)$ in time $C\max(|M|, \mathsf{TIME}(M,x))^2$, where $\mathsf{TIME}(M,x)$ is the number of steps taken by $M$ on input $x$ before it halts. 
\end{proposition}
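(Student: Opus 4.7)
The plan is to realize $\mathscr{U}$ as a multi-tape Turing machine over a fixed finite alphabet that simulates $M$ step by step. I would fix a concrete encoding in which each Turing machine is represented as a transition table over an arbitrary alphabet $\Sigma_M$ and state set $Q_M$ (with $|\Sigma_M|,|Q_M| \le |M|$, so that individual symbols and states fit in blocks of $O(\log|M|)$ cells of $\mathscr{U}$'s fixed alphabet). The machine $\mathscr{U}$ would use dedicated work tapes: a \emph{program tape} holding the transition table of $M$, a \emph{state tape} holding $M$'s current state, and a \emph{simulation tape} holding $M$'s tape contents in block-encoded form, with $\mathscr{U}$'s head always parked on the block corresponding to $M$'s current head position.

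After parsing $(M,x)$ and copying the program and the block-encoded input onto the appropriate tapes---something that can be arranged to cost $O(|M|^2 + |x|\cdot \log|M|)$ time under the usual convention that $\mathsf{TIME}(M,x)$ accounts for reading the input---$\mathscr{U}$ would simulate a single step of $M$ as follows. It reads the block of the simulation tape under its head and the current state from the state tape, performs one linear scan of the program tape to locate the matching transition, and then overwrites the block and the state and shifts the simulation head one block left or right. Because the program tape has length $O(|M|)$ and each block has length $O(\log|M|)$, each simulated step costs $O(|M|)$ time on $\mathscr{U}$.

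Summing over $T := \mathsf{TIME}(M,x)$ simulated steps, the total time is $O(|M|^2 + T\cdot|M|) = O\bigl(\max(|M|,T)^2\bigr)$, and choosing the constant $C$ large enough absorbs all the hidden constants, yielding the claimed bound. The only real design choices are the encoding of Turing machines and the multi-tape model: a single-tape universal simulator would incur an extra factor of $T$ from shuffling program and simulation data past each other, so the multi-tape model is what keeps the exponent at $2$.

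I would not expect any genuine obstacle here; this is a textbook construction, and the only content of the proposition is that the quadratic overhead in $\max(|M|,T)$ is achievable with a single fixed universal machine. The one point that requires a little care is ensuring that the alphabet conversion (from $\Sigma_M$ to $\mathscr{U}$'s fixed alphabet) and the transition lookup both fit into the per-step $O(|M|)$ budget, which is the reason for keeping the transition table on its own tape rather than interleaving it with the simulation.
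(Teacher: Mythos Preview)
Your construction is correct and is essentially the standard textbook simulation argument for a universal Turing machine with quadratic overhead (see, e.g., Hennie--Stearns and later expositions in Arora--Barak or Papadimitriou). The paper itself does not prove Proposition~\ref{prop:universal-tm}; it states it without proof as a known fact, so there is nothing to compare your argument against beyond confirming that it delivers the claimed bound, which it does.
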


\begin{definition}
The $i$-th level of the \emph{arithmetical hierarchy} contains $3$ classes $\Sigma^0_i$, $\Pi^0_i$, and $\Delta^0_i$. The class $\Sigma^0_i$ is the set of languages defined as $$L = \{x \in \{0,1\}^\ast : \exists y_1 \forall y_2 \exists y_3 \, \cdots  \,Q\, y_i \, R(x,y_1, \cdots,y_i) = 1\}$$ for some total Turing machine $R$, where $Q$ is the $\forall$ quantifier when $i$ is even and otherwise is the $\exists$ quantifier. The class $\Pi^0_i$ is the complement of $\Sigma_i^0$, and $\Delta^0_i = \Sigma^0_i \cap \Pi^0_i$.
\end{definition}

In particular the first level of the arithmetical hierarchy corresponds to the classes $\Sigma^0_1 = \RE$, $\Pi^0_1 = \coRE$, and $\Delta^0_1$ the set of decidable languages $\RE \cap \coRE$.

\subsection{Interactive verifiers}

In this section, we model multiprover interactive protocols, which is specified by a \emph{verifier} $V$, as a randomized algorithm. In the protocol, the verifier $V$ interacts with multiple provers, and at the end of the protocol the verifier outputs a bit indicating whether to accept or reject. A verifier can be identified with the interactive protocol it executes, and vice versa.

In more detail, define a \emph{$k$-input, $r$-prover verifier $V$} to be a randomized interactive Turing machine that has $k$ designated input tapes, $r$ communication tapes, a single workspace tape, and a single output tape. An interaction with $r$ provers is executed in the following way: the Turing machine $V$ alternates between computation and communication phases; in the computation phase, the Turing machine behaves like a normal Turing machine with $k+r+2$ tapes, and it may halt and indicate accept or reject on the output tape. It can also pause its computation and go into a communication phase, in which case the contents of each of $i$-th communication tape is read by the $i$-th prover, who then edits the $i$-th communication tape with its answer. After all the provers have finished with their responses, the next computation phase resumes. This is the standard way of modeling interactive Turing machines~\cite{ben1988multi}. In this formulation, a non-local game is simply specified by a $0$-input, $2$-prover verifier $V$ that has only one communication phase.

Given a $k$-input, $r$-prover verifier $V$, define its \emph{time complexity} with respect to a $k$-tuple of inputs $(x_1,\ldots,x_k)$ to be the maximum number of time steps taken by the verifier $V$ when it is initialized with $(x_1,\ldots,x_k)$ on its $k$ input tapes, over all possible responses of the $r$-provers, before it halts. We denote this by $\mathsf{TIME}(V(x_1,\ldots,x_k))$. 

We now define, in a somewhat informal level, \emph{finite-dimensional prover strategies} (or simply a \emph{strategy}) $\cal{S}$ for the interaction specified by a $k$-input, $r$-prover verifier $V$. This is a specification of the following data: 
\begin{enumerate}
    \item Local dimension $d \in \N$, 
    \item A state $\ket{\psi} \in (\C^d)^{\otimes r}$, and
    \item For every prover $i$, for every round $t \in \N$, for every string $\pi \in \{0,1\}^*$, a POVM $\{M_{i,t,\pi}^a \}_a$ acting on $\C^d$.
\end{enumerate}

Given a verifier $V$, a $k$-tuple $(x_1,\ldots,x_k)$, and a prover strategy $\cal{S}$ for $V$, the interaction proceeds as follows: at the beginning of the protocol, the provers share the state $\ket{\psi}$, and the verifier's input tapes are initialized to $(x_1,\ldots,x_k)$. At round $t$, the $i$-th prover performs the measurement $\{ M_{i,t,\pi}^a \}_a$ on its local space to obtain an outcome $a$, where $\pi$ is the \emph{history} of all the messages seen by prover $i$ in all previous rounds (including the message from the verifier in the $t$-th round). It then writes outcome $a$ on the $i$-th communication tape of the verifier. Thus at each round the shared state between the provers depend on the outcomes of their measurements, and evolves probabilistically over time. The \emph{value of strategy $\cal{S}$ in the interaction with verifier $V$ on input $(x_1,\ldots,x_k)$} is defined to be the probability that the verifier halts and accepts. We denote this by $\omega_q(V(x_1,\ldots,x_k),\cal{S})$. The \emph{quantum value of verifier $V$ on input $(x_1,\ldots,x_k)$} is defined to be the supremum of $\omega_q(V(x_1,\ldots,x_k),\cal{S})$ over all finite-dimensional strategies $\cal{S}$, which we denote by $\omega_q(V(x_1,\ldots,x_k))$. 

\begin{definition}
Let $m,r \in \N$ and let $0 \leq s \leq c \leq 1$. The class $\MIPstar[m,r,c,s]$ is defined to be the set of languages $L$ for which there exists a verifier $V$ and a polynomial $p(n)$ with the following properties:
\begin{enumerate}
    \item $V$ is a $1$-input, $r$-prover verifier that halts after $m$ communication phases.
    \item For all $x$, $\mathsf{TIME}(V(x)) \leq p(|x|)$. 
    \item If $x \in L$, then $\omega_q(V(x)) \geq c$.
    \item If $x \notin L$, then $\omega_q(V(x)) < s$.
\end{enumerate}
\end{definition}

We define the class $\MIPstar$ to be the union of $\MIPstar[m,r,c,s]$ for all $m,r \in \N$ and $c > s$. We define the class $\MIPzero$ to be the union of $\MIPstar[m,r,1,1]$ over all $m,r \in \N$. In other words, in the YES case (i.e., $x \in L$), there is a sequence of finite-dimensional prover strategies that are accepted with probability approaching $1$. In the NO case (i.e., $x \notin L$), there exists a positive $\epsilon > 0$ (that generally depends on $x$) such that all finite dimensional strategies are rejected with probability at least $\epsilon$.

\subsection{Compression of quantum multiprover interactive protocols}

In this section we formally present the two main ingredients used in our proof: the zero gap compression procedure of~\cite{Compress2}, and the reduction from the Halting problem to the problem of approximating the quantum value of a quantum multiprover interactive protocol. 

First we introduce the definition of $\lambda$-boundedness, which specifies how the time complexity of a verifier is bounded by a polynomial with exponent $\lambda$.

\begin{definition}
Let $\lambda \in \N$. A $(k+1)$-input $r$-prover verifier $V$ is $\lambda$-bounded if for all integers $n\in \N$ and strings $x_1,\ldots,x_k\in \{0,1\}^*$, we have $\mathsf{TIME}(V(n,x_1, ...,x_k)) \leq \lambda(n \cdot |x_1| \cdots |x_k| )^\lambda$. 
\end{definition}

Here, we assume that the first input to a verifier $V$ is an integer $n \in \N$ which intuitively specifies a ``complexity'' parameter. 

\begin{theorem}[Zero-gap compression{~\cite[Theorem 6.1]{Compress2}}] \label{Compression}
 Let $r \geq 3$ be an integer. There exists a universal constant $C_{comp} \in \N$ such that for every $\lambda \in \N$, there exists a Turing machine $\COMPRESS_\lambda$ with the following properties. Given as input a $(k+1)$-input $r$-prover verifier $V$, the Turing machine $\COMPRESS_\lambda$ outputs a $(k+1)$-input $r$-prover verifier $V^\#$ in time $C_{comp} (|V| \cdot \lambda)^{C_{comp}}$ with the following properties: for all $x_1,\ldots,x_k \in \{0,1\}^*$, we have
\begin{enumerate}
    \item if $V$ is $\lambda$-bounded, then $\omega_q(V^\#(n,x_1,...x_k)) \geq \frac{1}{2} + \frac{1}{2} \omega_q(V(2^n,x_1,...x_k)),$
    \item if $V$ is $\lambda$-bounded and $\omega_q(V(2^n,x_1,...x_k)) < 1$, then $\omega_q(V^\#(n,x_1,...x_k)) < 1,$
    \item for all integers $n\in \N$, $x_1,\ldots,x_k\in \{0,1\}^*$, we have $\mathsf{TIME}(V^\#(n,x_1, ...,x_k)) \leq C_{comp} (\lambda \cdot n \cdot |x_1| \cdots |x_k| )^{C_{comp}}$. 
\end{enumerate}
\end{theorem}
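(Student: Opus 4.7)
The plan is to build $V^\#$ so that, instead of simulating $V$ directly (which would cost $\Omega(\lambda \cdot (2^n)^\lambda)$ time), it delegates the entire execution of $V$ on input $(2^n,x_1,\dots,x_k)$ to the provers and uses a small number of randomized checks to catch any deviation. The output machine $V^\#$ must therefore only (i) read the description of $V$, $\lambda$, $n$, and the $x_i$'s, (ii) pick $O(n + \log|V| + \log\lambda)$ random bits, and (iii) evaluate a handful of short predicates, giving runtime $\mathrm{poly}(|V|,\lambda,n)$. Moreover the structure of the bound $\omega_q(V^\#)\ge \tfrac12 + \tfrac12\,\omega_q(V)$ strongly suggests that $V^\#$ flips a fair coin between a \emph{simulation/consistency} sub-test and an \emph{honesty} sub-test on the strategy, so that an honest, honest-simulating strategy accepts with probability $1$ on the second sub-test and with probability $\omega_q(V)$ on the first.

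First, I would convert $V$ into a succinct non-local game $G_V^{(n)}$ on input $(2^n,x_1,\dots,x_k)$ via a PCP-style encoding of $V$'s time-$T$ computation, where $T=\lambda(2^n|x_1|\cdots|x_k|)^\lambda$. The transcript of an accepting interaction of $V$ with a prover strategy is encoded by a low-degree polynomial over a field of size $\mathrm{poly}(T)=2^{\mathrm{poly}(n)}$, and the verifier's predicate reduces to checking a constant number of local constraints at random points. This step only requires $V^\#$ to know the \emph{description} of $V$, not to run it, so its time cost is $\mathrm{poly}(|V|,\lambda,n)$. Next, I would apply quantum \emph{introspection}: rather than sampling and sending random points itself (which would take $\log T = \mathrm{poly}(n)$-bit messages, already fine, but introspection is what cuts the verifier's \emph{decoding} work to polylog), the verifier asks the provers to generate their own questions by measuring a shared entangled state in bases dictated by a self-test (e.g.\ a Magic-Square / Pauli-braiding style rigidity test), and only sends short ``register labels'' telling each prover which marginal to sample. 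This is the step that forces us to use $r\ge 3$ provers: one needs an extra prover for oracularization, so that the verifier can cross-check the self-generated questions against the answers of a prover receiving a single question.

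For \textbf{completeness} (property 1), given any finite-dimensional strategy $\mathcal S$ for $V(2^n,x_1,\dots,x_k)$ with value $v$, the provers in $V^\#$ share additional EPR / Pauli-braiding registers to simulate introspection, and use $\mathcal S$ to answer the simulated queries; the honesty sub-test accepts with probability $1$ and the simulation sub-test with probability $v$, yielding value $\tfrac12+\tfrac12 v$. For \textbf{soundness in the zero-gap sense} (property 2), I would show that a strategy achieving value exactly $1$ in $V^\#$ must pass both sub-tests perfectly; rigidity of the self-test then forces the provers' measurements, on the introspection registers, to be isometric to the honest sampling distribution, and the consistency/oracularization checks force their answers to be drawn from a fixed POVM family $\{M^a_{i,t,\pi}\}$ which constitutes a legitimate strategy for $V$. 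In particular, that induced strategy wins $V(2^n,x_1,\dots,x_k)$ with probability $1$, so $\omega_q(V(2^n,\dots))=1$; contrapositively, $\omega_q(V(2^n,\dots))<1$ implies $\omega_q(V^\#(n,\dots))<1$. Property 3 (the runtime bound) falls out of a line-by-line accounting, since every action of $V^\#$ is local to a polylog-size window of the encoded transcript.

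The main obstacle is the zero-gap soundness analysis: unlike in the gap-preserving compression underlying $\MIPstar=\RE$, I cannot appeal to a constant-error robust self-test; I need a \emph{perfect} rigidity statement of the form ``value exactly $1$ implies the strategy is, up to local isometry, the honest one.'' Engineering each component (low-degree test, introspection, oracularization, answer reduction) so that value $1$ in the composed protocol entails value $1$ in the underlying game $V$ requires choosing self-tests whose perfect commutation/consistency conditions are exactly equivalent to honesty, and arranging the composition so that no cheating strategy can compensate for imperfection in one sub-test with slack in another. The secondary obstacle is keeping the number of provers down to three: this forces the use of a combined answer-reduction code (the footnote in the excerpt alludes to this) and a careful oracularization pattern that reuses the same three provers across the introspection, consistency, and simulation sub-tests.
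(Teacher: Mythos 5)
First, a point of context: the paper does not prove this theorem at all. It is imported verbatim (with cosmetic adaptations) from Theorem 6.1 of~\cite{Compress2}; the only ``proof content'' in the paper is the remark that verifiers as defined here can be converted to the Gate Turing Machines of~\cite{Compress2} by an oblivious simulation, and that the player count drops from $15$ to $3$ by replacing the Steane code with a $[[3,1,2]]_3$ error-detecting code. That last remark already tells you that the actual construction is quite different from yours: in~\cite{Compress2} the provers jointly hold a \emph{history state} of the verifier's computation (a circuit-to-Hamiltonian encoding), shared across the provers via a quantum error-detecting code, and the compressed verifier performs succinctly-describable stabilizer and propagation checks; the $r\ge 3$ requirement comes from the code parameters, not from oracularization. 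Your route --- PCP encoding, introspection, self-testing, answer reduction --- is the machinery of $\MIPstar=\RE$, which historically came \emph{after} this theorem and solves a different (gap-preserving) problem.

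The genuine gap in your proposal is the soundness argument for property 2, and it is not a detail. You propose to prove ``value exactly $1$ in $V^\#$ implies the induced strategy wins $V$ with probability $1$'' via perfect rigidity. But $\omega_q$ is a \emph{supremum} over finite-dimensional strategies: $\omega_q(V^\#(n,\ldots))=1$ can hold even when no single strategy attains value $1$, so a statement about exact-value-$1$ strategies proves nothing about the supremum. What you actually need is a quantitative statement of the form ``value $1-\epsilon$ in $V^\#$ yields a strategy of value $1-\delta(\epsilon,T)$ in $V$'' with $\delta\to 0$ as $\epsilon\to 0$ (the dependence on the runtime $T$ is allowed to be terrible --- that is why the theorem is only zero-gap --- but the limit must vanish). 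The composed introspection/answer-reduction analyses you invoke are only known to give constant-error robustness after composition, and extracting vanishing-error behavior from them is precisely the difficulty the paper flags in its open problems (``can $\MIPstar=\RE$ be used to derive a zero-gap compression theorem in a black-box fashion?''). The Hamiltonian approach of~\cite{Compress2} gets the required vanishing-error bound from a spectral-gap argument, which your sketch has no analogue of. Secondary, smaller gaps: the completeness bound $\tfrac12+\tfrac12\omega_q(V)$ is asserted from the coin-flip structure but you never verify that honest provers can pass your ``honesty'' sub-test with probability exactly $1$ while simultaneously running $\cal{S}$ on the simulation sub-test; and the claim that $V^\#$ can evaluate the low-degree extension of $V$'s transition function in time $\mathrm{poly}(|V|,\lambda,n)$ silently requires the oblivious-Turing-machine uniformization that the paper's remark points to.
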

The zero-gap compression theorem, as presented here, differs from the one presented in~\cite[Theorem 6.1]{Compress2}. For example, verifiers in~\cite{Compress2} are described using so-called ``Gate Turing Machines'' (GTMs). However, using the same oblivious Turing machine simulation techniques as discussed in the appendix of~\cite{Compress2}, from a verifier $V$ (as defined in this paper), we can obtain a GTM that specifies the same interactive protocol. Another difference, as remarked in the introduction, is that here the compression result applies to protocols with three or more players, whereas it is stated for protocols with $15$ or more players in~\cite{Compress2}. However, the results of~\cite{Compress2} can be adapted to the case of three players by using a $[[3,1,2]]_3$ error detecting code with qutrits (instead of using the $7$-qubit Steane code with qubits)~\cite{cleve1999share}.

\medskip 

Next we present the main result of~\cite{ji2020mip}, which presents a computable reduction from the Halting problem to the problem of approximating the quantum value of a non-local game. 

\begin{theorem}[$\MIP^* = \RE${~\cite{ji2020mip}}]\label{HALT} 
There exists a Turing machine $H$ and a universal constant $C_{\HALT}\in \N$ with the following properties. Given as input a Turing machine $M$, it runs in time $C_{\HALT}|M|^{C_{\HALT}}$ and outputs a $0$-input $2$-prover verifier $V_{\HALT,M}$ such that 
\begin{enumerate}
    \item If $M$ halts on empty tape then $\omega_q(V_{\HALT,M}) = 1$, and otherwise $\omega_q(V_{\HALT,M}) \leq \frac{1}{2}$.
    \item $\mathsf{TIME}(V_{\HALT,M}) \leq C_{\HALT}|M|^{C_{\HALT}}$.
\end{enumerate}
\end{theorem}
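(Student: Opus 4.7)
The plan is to reproduce the strategy of~\cite{ji2020mip}, which establishes this theorem by proving a \emph{gap-preserving} compression theorem for two-prover protocols and then recursively unfolding it to encode the Halting problem. The architecture parallels the way the zero-gap compression theorem (Theorem~\ref{Compression}) is used elsewhere in this paper to prove $\coRE \subseteq \MIPzero$, but is more demanding in two crucial ways: the compression must preserve a \emph{constant} promise gap (not just the value-$1$ case), and the end protocol must use only two provers.

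The first and central step is the gap-preserving compression theorem itself: given a $\lambda$-bounded verifier $V$, construct a two-prover verifier $V^\#$ whose time complexity is polylogarithmic in that of $V$ and such that $\omega_q(V^\#(n,\cdots)) = 1$ whenever $\omega_q(V(2^n,\cdots)) = 1$, and $\omega_q(V^\#(n,\cdots)) \leq \tfrac{1}{2}$ whenever $\omega_q(V(2^n,\cdots)) \leq \tfrac{1}{2}$. I would build this from three ingredients. First, a rigidity backbone (such as a Pauli basis self-test) that robustly forces honest provers to share many EPR pairs and perform specified Pauli measurements. Second, an \emph{introspection} protocol that, leveraging the rigidity, delegates the sampling of $V$'s long random questions to the provers themselves, so that the outer verifier needs only polylogarithmic time to read and check them. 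Third, \emph{answer reduction} via a quantum-sound PCP based on a low individual degree test, so that only a constant number of symbols of the provers' answers are inspected locally.

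With gap-preserving compression in hand, I would construct $V_{\HALT,M}$ by a Kleene-style fixed-point argument, paralleling the recursive verifier used in~\cite{Compress2} for the zero-gap case. Define a verifier $W$ that, on input $n$, first simulates $M$ for $n$ steps and accepts if $M$ halts, and otherwise applies the gap-preserving compression procedure to itself and invokes the resulting compressed verifier on input $n+1$. Using the recursion theorem to tie the knot, one obtains an explicit verifier $V_{\HALT,M}$ whose runtime is polynomial in $|M|$. If $M$ halts in $T$ steps, there is a strategy following the recursion to depth $O(\log^* T)$ that is accepted with probability $1$. If $M$ never halts, the constant promise gap at each level of recursion, combined with the two cases of the compression theorem, propagates upward through the unfolding and yields a global soundness bound of $\tfrac{1}{2}$.

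The principal obstacle, by a wide margin, is proving quantum soundness of the introspection and answer-reduction subprotocols with only two entangled provers. Classical PCP and low-degree-testing arguments fail against entangled strategies, so the proof needs a quantum-sound low individual degree test, a robust self-testing theorem for the anchor game that tolerates arbitrary Hilbert-space errors, and a careful propagation of approximate-commutation and approximate-representation estimates through many composed rounds. Assembling these ingredients into a single two-prover protocol with polylogarithmic verifier complexity and constant soundness gap is the technical heart of the argument; once this compression theorem is established, the reduction from $\HALT$ is a relatively short recursion argument on top of it.
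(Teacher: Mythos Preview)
The paper does not prove this theorem at all: it is quoted from~\cite{ji2020mip} and used purely as a black box (alongside the zero-gap compression theorem of~\cite{Compress2}) in the construction of the $\MIPzero$ verifier for $\PItwo$. So there is no ``paper's own proof'' to compare against; you have written a proof sketch where the paper deliberately writes none.

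That said, your sketch is a faithful high-level summary of the architecture of~\cite{ji2020mip}: gap-preserving compression built from self-testing rigidity, question reduction via introspection, and answer reduction via a quantum-sound PCP/low-degree test, followed by a recursive fixed-point construction to encode the Halting problem. Your identification of the main difficulty (quantum soundness of the introspection and low-degree-test components with only two provers and a constant gap) is accurate. One minor inaccuracy: in~\cite{ji2020mip} the recursion is not literally ``simulate $M$ for $n$ steps and otherwise compress''; the compression applies to a restricted class of \emph{normal-form} verifiers, and the base case and parameter growth are handled somewhat differently. But at the level of detail you give, the outline is correct --- it is just not something this paper undertakes.
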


 \section{$\MIPzero = \PItwo$}

We start this section by showing the upper bound $\MIPzero \subseteq \Pi_2^0$. 
\begin{theorem} \label{upperbound}
$\MIPzero \subseteq \Pi_2^0$
\end{theorem}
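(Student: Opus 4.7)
The plan is to unfold the definition $\omega_q(V(x)) = 1$ as a $\forall\exists$ statement. Since a supremum of values in $[0,1]$ equals $1$ iff for every rational $\epsilon > 0$ some element lies within $\epsilon$ of $1$, we have $x \in L$ iff $\forall \epsilon > 0 \, \exists \cal{S}$ with $\omega_q(V(x), \cal{S}) \geq 1 - \epsilon$. The quantifier structure is already correct; the work is to exhibit a \emph{decidable} predicate inside and to realize the existential as ranging over finitely encoded witnesses.

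To that end, I would restrict the existential to strategies whose data is rational: a dimension $d \in \N$, a state $\ket{\psi} \in (\C^d)^{\otimes r}$ with rational real and imaginary components, and POVMs $\{M_{i,t,\pi}^a\}$ with rational matrix entries. Since $V$ halts in time polynomial in $|x|$, only finitely many histories $\pi$ arise in the interaction, so such a strategy admits a finite encoding $z$. Given such $z$, $\omega_q(V(x), \cal{S}_z)$ can be computed \emph{exactly} as a rational number by simulating $V$ and summing over all (finitely many) branches of the protocol; in each branch the acceptance probability is a product of rational matrix entries evaluated on the rational state. The predicate $R(x, y, z)$ therefore parses $y$ as a rational $\epsilon > 0$ and $z$ as a rational strategy, checks the PSD and normalization constraints, computes the rational value, and accepts iff this value is $\geq 1 - \epsilon$; this is decidable.

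The main obstacle is the density claim that $\sup_{\cal{S}\text{ rational}} \omega_q(V(x), \cal{S}) = \omega_q(V(x))$. The approach I would take is a slack-and-perturb argument. Given any strategy $\cal{S}$ of dimension $d$ with value $v$, first inflate each POVM by $M^a \mapsto (1-\delta) M^a + \delta I / n_a$ (where $n_a$ is the number of outcomes at that query), which is again a POVM and whose operators all lie in the strict interior of the PSD cone, while changing the value by only $O(\delta)$ over the bounded-time interaction. Then perturb each such operator to a nearby rational Hermitian matrix and absorb the correction needed to restore $\sum_a = I$ into one distinguished outcome; this outcome remains PSD thanks to the slack just introduced. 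Approximate $\ket{\psi}$ by a nearby rational vector similarly (treating the normalization implicitly, as probabilities can be defined up to the rational factor $\langle \psi | \psi \rangle$). The acceptance probability is a polynomial function of the finitely many parameters specifying the state and POVMs at all histories encountered, hence continuous on the compact set of valid strategies of dimension $d$, so the rational strategy achieves value within any prescribed $\eta > 0$ of $v$. Combining everything, $x \in L$ iff $\forall y \, \exists z\, R(x, y, z) = 1$, witnessing $L \in \Pi_2^0$.
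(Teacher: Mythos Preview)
Your proof is correct and follows essentially the same approach as the paper: both unfold $\omega_q(V(x))=1$ into a $\forall\,\exists$ statement over a countable family of finitely-encoded strategies and verify that the inner predicate is decidable. The only difference is cosmetic---the paper obtains its countable family via finite $\epsilon$-nets for each dimension (citing compactness), whereas you use rational strategies together with an explicit slack-and-perturb density argument; both rest on the same continuity of the acceptance probability in the strategy parameters.
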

\begin{proof}
Let $L \in \MIP^\ast_0$. There exists a $1$-input $r$-prover verifier $V$ such that $x \in L$ iff $\omega_q(V(x)) = 1$ for all $x \in \{0,1\}^*$. Let $\mathscr{S}_{\epsilon, d}$ be an $\epsilon$-net for the space of strategies of dimension $d$; in particular, for every dimension-$d$ strategy $\cal{S}$ there exists a strategy $\cal{S}' \in \mathscr{S}_{\epsilon,d}$ such that for all verifiers $V$ we have that $|\omega_q(V,\cal{S}) - \omega_q(V,\cal{S}')|\leq \epsilon$ (in other words, the winning probability of the strategies differ by at most $\epsilon$). Because the set of strategies over a finite dimensional Hilbert space of a fixed dimension is a compact set \cite{compact}, we can take $\mathscr{S}_{\epsilon,d}$ to be a finite set. Let $\mathscr{S}_\epsilon = \bigcup_{d \in \N} \mathscr{S}_{\epsilon, d}$, and let $\{ \cal{S}_\epsilon(1),\cal{S}_\epsilon(2),\ldots\}$ be an enumeration of strategies in $\mathscr{S}_\epsilon$.

Consider the following total Turing machine $T$: On input triple $(x,m,n)$ where $x\in \{0,1\}^\ast,m,n \in \N$. It outputs $1$ if and only $\omega_q(V(x),\mathcal{S}_{1/2m}(n)) \geq 1-1/m$. Now it is easy to verify that \[L =  \{x : \forall m\ldotp\ \exists n\ldotp T(x,m,n) = 1\},\]
and therefore $L$ is a $\Pi_2^0$ language. 

To see this, let $x \in L$. Then $\omega_q(V(x)) = 1$, and for any gap (i.e. $\frac{1}{m}$) there exists a strategy $S$ such that $\omega_q(V(x), S) \geq 1 - \frac{1}{2m}$. Choosing $\epsilon = 1/2m$, then there must also exist a strategy $S' \in \mathscr{S}_{1/2m}$ such that $\omega_q(V(x), S') \geq \omega_q(V(x), S) - \frac{1}{2m} \geq 1 - \frac{1}{m}$. Therefore $\forall m\ldotp\ \exists n\ldotp T(x,m,n) = 1$. 

Likewise, if $x \notin L$ then there exists $m \in \N$ for which $\omega_q(V(x)) < 1 - \frac{1}{m}$ and so no strategy can win with probability greater or equal to $1 - \frac{1}{m}$. Therefore $\exists m\ldotp\ \forall n\ldotp T(x,m,n) = 0$.
\end{proof}

Now we prove the reverse inclusion. Fix an $L\in \PItwo$ and let $R$ be a total Turing machine such that $L = \{x\in \{0,1\}^\ast : \forall m\ldotp \exists n\ldotp R(x,m,n)=1 \}$. To prove $L \in \MIPzero$, we construct a $2$-input $3$-prover verifier $V$ that takes as input $m\in\N$ and $x \in \{0,1\}^\ast$, and has the key property that $\omega^\ast(V(m,x))=1$ if and only if $\forall m' \geq \log^*(m) \ldotp \exists n \ldotp R(x,m',n)=1$. Therefore $\omega_q(V(1,x))=1$ if and only if $x \in L$. 

We first give the explicit description of a $3$-input $3$-prover verifier $V'$ below. We then use that to construct $V$. In the description of $V'$, we refer to the Turing machine $R_{x,m}$. For every $x\in\{0,1\}^\ast$ and $m\in\N$, $R_{x,m}$ is the Turing machine that on the empty tape enumerates over $n\in \N$ and accepts if $R(x,m,n)= 1$, otherwise it loops forever.  

\begin{figure}[!htb]
  \centering
  \begin{gamespec}
    Input: $(m,x,W)$ where $m \in \N$, $x \in \{0,1\}^\ast$, $W$ is a $3$-input $3$-prover verifier. \\
    Perform the following steps:
    \begin{enumerate}
        \item Compute $V_{\HALT,R_{x,\log^*(m)}} = H(R_{x,\log^*(m)})$ (where $H$ is from Theorem \ref{HALT}).
        \item Execute the interactive protocol specified by the verifier $V_{\HALT,R_{x,\log^*(m)}}$. If the verifier $V_{\HALT,R_{x,\log^*(m)}}$ rejects then reject, otherwise continue. 
        \item Compute $W^\# = \COMPRESS_\lambda(W)$ (where $\COMPRESS_\lambda$ is from Theorem \ref{Compression}).
        \item Execute the interactive protocol specified by the verifier $W^\#(m,x,W)$ and accept if and only if the verifier $W^\#(m,x,W)$ accepts. 
    \end{enumerate}

  \end{gamespec}
  \caption{Specification of the $3$-input $3$-prover verifier $V'$}
\end{figure}

Now let $V$ be the $2$-input $3$-prover verifier that on the input $(m,x)$ runs $V'(m,x,V')$. Informally, $V(m,x)$ first decides $\exists n \ldotp R(x,\log^*(m),n)=1$ by simulating the verifier in $V_{\HALT,R_{x,\log^*(m)}}$ from Theorem \ref{HALT}. Recall that the existence of the $\MIPstar$ protocol $V_{\HALT,R_{x,\log^*(m)}}$ is due to $\MIPstar=\RE$ and the fact that $\exists n \ldotp R(x,\log^*(m),n)=1$ is an $\RE$ predicate. Now if $R(x,\log^*(m),n)=0$ for all $n$, then $R_{x,\log^*(m)}$ never halts. This in turn implies that $V$ rejects with probability at least $1/2$. Otherwise, if $\exists n\ldotp R(x,\log^*(m),n)=1$, $V$ proceeds to run the compression algorithm to obtain $V'^\# = \COMPRESS_\lambda(V')$. It then executes $V'^\#(m,x,V')$. Informally speaking, due to the compression theorem, the execution of $V'^\#(m,x,V')$ has the same effect as recursively executing $V(2^m,x)$. Now the first duty of the verifier $V(2^m,x)$ is to decide $\exists n \ldotp R(x,1+\log^*(m),n)=1$. So we can apply the above reasoning this time on $V(2^m,x)$ instead of $V(m,x)$. Following this reasoning ad infinitum, we establish that $\omega_q(V(m,x))=1$ if and only if $\forall m' \geq \log^*(m) \ldotp \exists n \ldotp R(x,m',n)=1$. This is made precise in the proof of Theorem \ref{mainThm}.

Note that Theorem \ref{Compression} relates $V^\#(m, x)$ to $V(2^m, x)$. That is the reason $\log^*(m)$ (as opposed to $m$) is appearing in Figure 2. Note that as $m$ increases, $\log^\ast(m)$ ranges over all nonnegative integers.

In order to apply Theorem \ref{Compression} to compress $V$ in step 3, we must ensure that the verifier is $\lambda$-bounded for some $\lambda \in \N$. 
\begin{claim} \label{bounded}
There exists a $\lambda \in \N$ such that $V$ is $\lambda$-bounded.
\end{claim}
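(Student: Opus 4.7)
The plan is to expand $V(n,x) = V'(n,x,V')$ and bound the running time of each of the four phases in $V'$'s specification, then pick $\lambda$ large enough that the resulting bound fits into the form $\lambda(n\cdot|x|)^\lambda$. By Theorem \ref{HALT}, steps 1 and 2 each run in time polynomial in $|R_{x,\log^*(n)}| = O(|x| + \log\log^*(n))$. By Theorem \ref{Compression}, step 3 computes $\COMPRESS_\lambda(V')$ in time $C_{comp}(|V'|\cdot\lambda)^{C_{comp}}$, and step 4 simulates the compressed verifier on input $(n,x,V')$ in time at most $C_{comp}(\lambda\cdot n\cdot|x|\cdot|V'|)^{C_{comp}}$ by item 3 of Theorem \ref{Compression}. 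Adding an extra polynomial overhead for universal Turing machine simulation (Proposition \ref{prop:universal-tm}) and summing yields a total bound of the form $\mathsf{TIME}(V(n,x)) \leq A\cdot(n\cdot|x|)^B$, where $B$ is an absolute constant and $A$ depends polynomially on $\lambda$ and on $|V'|$.

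The central observation is that in the execution of $V$, the third input to $V'$ is always $V'$ itself, a fixed string. Hence $|V'|$ is a constant, depending only on the hardcoded data of $V'$: the descriptions of the Turing machine $R$, of $H$ and $\COMPRESS_\lambda$, and of the integer $\lambda$. In particular, $|V'| \leq c_0 + c_1\log\lambda$ for absolute constants $c_0, c_1$, so the recursive self-reference $V(n,x) = V'(n,x,V')$ does not blow up the running time; the coefficient $A$ above grows only polynomially in $\lambda$.

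The remaining technical step, which I expect to be the main subtlety, is to reconcile the fact that the $\lambda$ in $\COMPRESS_\lambda$ inside $V'$ and the $\lambda$ in the claim's $\lambda$-boundedness assertion refer to the same integer. The resolution is that the target bound $\lambda(n\cdot|x|)^\lambda$ grows \emph{exponentially} in $\lambda$ once $n\cdot|x| \geq 2$, whereas our upper bound $A(n|x|)^B$ grows only \emph{polynomially} in $\lambda$. Concretely, rewriting $\lambda(n|x|)^\lambda = \lambda\cdot(n|x|)^B\cdot(n|x|)^{\lambda-B}$, it suffices that $A \leq \lambda\cdot 2^{\lambda-B}$, which holds for all sufficiently large $\lambda$ since the right side eventually dominates any polynomial in $\lambda$. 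Thus for such $\lambda$ the inequality $\mathsf{TIME}(V(n,x)) \leq \lambda(n|x|)^\lambda$ holds on all non-trivial inputs; the finitely many degenerate inputs with $n\cdot|x| < 2$ are accommodated by a further enlargement of $\lambda$ (or by the standing convention that $n\geq 1$ and $|x|\geq 1$). I would therefore fix $\lambda$ to be such a sufficiently large integer and verify the bound directly.
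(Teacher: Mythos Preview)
Your proposal is correct and follows essentially the same approach as the paper: bound the running time of each of the four steps of $V'$ by a polynomial in $m$, $|x|$, $|V'|$, and $\lambda$, then choose $\lambda$ large enough that the total fits under $\lambda(m\cdot|x|)^\lambda$. You are in fact more careful than the paper on the one genuine subtlety---the circularity that $|V'|$ itself depends on $\lambda$ (since $\COMPRESS_\lambda$ is hardcoded)---which the paper handles only implicitly by treating $|V|$ as a constant; your observation that $|V'| = O(\log\lambda)$ and hence $A$ is polynomial in $\lambda$ while the target grows exponentially in $\lambda$ is exactly the point that makes the ``choose $\lambda$ sufficiently large'' step go through.
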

\begin{proof}
We bound the running time of $V$ by bounding the running time of each of the steps in its specification. The time to compute the description of $R_{x,\log^*(m)}$, in step 1, is $C\paren{|R| \cdot |x| \cdot m}^C$ for some universal constant $C$. The time to compute the encoding of $V_{\HALT,R_{x,\log^*(m)}}$ is $C_{\HALT}(|R| \cdot |x| \cdot m)^{C_{\HALT}}$. This also bounds the running time of $V_{\HALT,R_{x,\log^*(m)}}$. Therefore the time to simulate $V_{\HALT,R_{x,\log^*(m)}}$ is bounded by $C_{\HALT}^2(|R|\cdot |x| \cdot m)^{2C_{\HALT}}$. The time to simulate ${\COMPRESS_\lambda}(V)$ is $C_{comp}^2 (|V|\cdot \lambda)^{2C_{comp}}$. The time to simulate $V^\#(m,x)$ is bounded by $C_{comp}^2 (\lambda \cdot m \cdot |x|)^{2C_{comp}}$. Therefore the running time of  $V(m,x)$ is bounded above by 
\[2C_{\HALT}^2(|R|\cdot |x|\cdot m)^{2C_{\HALT}} + C\paren{|R|\cdot |x|\cdot m}^C + C_{comp}^2 (|V|\cdot \lambda)^{2C_{comp}} + C_{comp}^2 (\lambda \cdot m \cdot|x|)^{2C_{comp}}.\]

The values $C_{comp}, C_{\HALT}, C,$ and $|R|$ are all constants so we can choose $\lambda \in \N$ sufficiently large so that $\lambda (m \cdot|x|)^\lambda$ is larger then the quantity above and therefore $V$ is $\lambda$-bounded.
\end{proof}

Now that we established that $V$ is $\lambda$-bounded, we can apply Theorem \ref{Compression} to get the main theorem of this paper. 

\begin{theorem} \label{mainThm}
$x \in L$ if and only if $\omega_q(V(1,x)) = 1$
\end{theorem}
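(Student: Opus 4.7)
The plan is to prove a stronger recursive statement: for every $m \in \N$ and $x \in \{0,1\}^*$,
\[
\omega_q(V(m, x)) = 1 \iff \forall m' \geq \log^*(m) :\; \exists n :\; R(x, m', n) = 1,
\]
where $m'$ ranges over $\N$. Setting $m = 1$, for which $\log^*(m) = 0$, recovers exactly the condition $x \in L$ and yields the theorem.

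The key technical ingredient is the \emph{one-step recursion}
\[
\omega_q(V(m, x)) = 1 \;\iff\; \bigl[\exists n :\, R(x, \log^*(m), n) = 1\bigr] \;\wedge\; \omega_q(V(2^m, x)) = 1. \qquad (\ast)
\]
To derive $(\ast)$, I decompose the protocol executed by $V(m, x) = V'(m, x, V')$ into its two sequential sub-protocols: first the $\MIPstar$ protocol $V_{\HALT, R_{x, \log^*(m)}}$ provided by Theorem \ref{HALT}, then the compressed protocol $V'^\#(m, x, V')$. Since the verifier accepts only if both sub-protocols accept, near-optimal strategies for each sub-protocol can be combined via a tensor-product strategy to win both simultaneously with probability approaching the product of their values; hence the joint quantum value equals $1$ iff each sub-value equals $1$. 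Theorem \ref{HALT} then identifies $\omega_q(V_{\HALT, R_{x, \log^*(m)}}) = 1$ with the halting of $R_{x, \log^*(m)}$, which is the predicate $\exists n : R(x, \log^*(m), n) = 1$. Theorem \ref{Compression} (applicable because Claim \ref{bounded} shows $V'$ is $\lambda$-bounded) gives, via its two value-preserving properties, $\omega_q(V'^\#(m, x, V')) = 1 \iff \omega_q(V'(2^m, x, V')) = \omega_q(V(2^m, x)) = 1$. Combining these yields $(\ast)$.

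From $(\ast)$ the $(\Rightarrow)$ direction of the recursive claim is immediate: iterating $(\ast)$ forward along the tower $m, 2^m, 2^{2^m}, \ldots$ produces, for each $k \geq 0$, an $n$ witnessing $R(x, \log^*(m) + k, n) = 1$. The $(\Leftarrow)$ direction is more subtle because the tower recursion is not well-founded and ordinary induction is unavailable. Instead, I would exploit the \emph{quantitative} bound from property~1 of Theorem \ref{Compression}, combined with the fact that $\omega_q(V_{\HALT, R_{x, m'}}) = 1$ at every relevant level, to obtain the one-step inequality
\[
\omega_q(V(m, x)) \;\geq\; \tfrac{1}{2} + \tfrac{1}{2}\, \omega_q(V(2^m, x)).
\]
Starting from the trivial bound $\omega_q \geq 0$ at the top of a tower of height $d$ and unwinding this inequality $d$ times gives $\omega_q(V(m, x)) \geq 1 - 2^{-d}$; letting $d \to \infty$ forces $\omega_q(V(m, x)) = 1$.

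The main obstacle is precisely this $(\Leftarrow)$ direction: because each level of the recursion refers to a \emph{larger} instance of $V$, one must realize the value $1$ as a supremum of finite-dimensional strategies bootstrapped from arbitrarily deep tower levels, rather than as the value of any single strategy. The $\tfrac{1}{2} + \tfrac{1}{2}(\cdot)$ factor in the compression theorem is exactly what makes this bootstrap converge geometrically to $1$. The verification that tensor products of near-optimal strategies for the two sub-protocols behave faithfully as joint strategies for $V(m, x)$ (so that no entanglement between sub-protocols is needed, and the product of the sub-values really is a lower bound on the joint value) is the routine-but-essential bookkeeping underlying the entire construction.
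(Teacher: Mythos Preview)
Your proposal is correct and follows essentially the same approach as the paper: both directions rest on the identity $\omega_q(V(m,x)) = \omega_q(V'^\#(m,x,V'))$ whenever the $V_{\HALT}$ sub-protocol has value $1$, together with the two value-transfer properties of Theorem~\ref{Compression} and the geometric unwinding of the bound $\omega_q(V(m,x)) \geq \tfrac{1}{2} + \tfrac{1}{2}\,\omega_q(V(2^m,x))$. The only cosmetic difference is packaging: you state the one-step biconditional $(\ast)$ and iterate it forward for the $(\Rightarrow)$ direction, whereas the paper argues the contrapositive by locating the least $p$ at which $R_{x,\log^*(p)}$ fails to halt and backtracking finitely via property~2 of the compression theorem---but these are the same argument read in opposite directions.
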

\begin{proof}
First suppose $x \in L$. Then $\forall m \ldotp \exists n \ldotp R(x,m,n) = 1$. Since the Turing machine $R_{x,m}$ halts for every $m \in \N$, by Theorem \ref{HALT}, $\omega_q(V_{\HALT,R_{x,m}}) = 1$. Therefore $\omega_q(V(p,x)) = \omega_q(V^\#(p,x))$, for any $p \in \N$, by construction (step 4). Now, from Theorem \ref{Compression}, we have

\[\omega_q(V(p,x)) \geq \frac{1}{2} + \frac{\omega_q(V(2^p,x))}{2}, \]
and by $k$ applications of the theorem, we obtain 
\[ \omega_q(V(p,x)) \geq  \frac{\omega_q(V(\overbrace{2^{2^{{...}^{2^p}}}}^k,x))}{2^k} + \sum_{i = 1}^{k}\frac{1}{2^i}.\]
for every $k$. Taking the limit $k\to \infty$, we have $\omega_q(V(p,x)) = 1$ for all $p \in \N$. In particular $\omega_q(V(1,x)) = 1$.

Now suppose $x \notin L$. Then $\exists m \ldotp \forall n \ldotp R(x,m,n) = 0$. We prove that $\omega(V(1,x)) < 1$. Let $p$ be the smallest integer for which $R(x,\log^*(p),n)=0$ for every $n$. In other words, the Turing machine $R_{x,\log^*(p)}$ does not halt. Therefore by Theorem \ref{HALT} we have that $\omega_q(V(p,x)) \leq \omega_q(V_{\HALT,R_{x,\log^*(p)}}) \leq \frac{1}{2}$. 

If $p = 1$, we are done. Suppose $p > 1$. For all $k < p$, the game $V_{\HALT,R_{x,\log^*(k)}}$ never rejects since the Turing machine $R_{x,\log^*(k)}$ halts, by the minimality of $p$. Therefore $\omega_q(V(k,x)) = \omega_q(V^\#(k,x))$. So by recursively applying Theorem \ref{Compression}, we have that
$$\omega_q(V(p,x)) < 1 \implies \omega_q(V(1,x)) < 1.$$ 

Since $\omega_q(V(p,x)) \leq \omega_q(V_{\HALT,R_{x,\log^*(p)}}) \leq \frac{1}{2} $ then $\omega_q(V(1,x)) < 1$.
\end{proof}

\begin{corollary}
$\Pi_2^0\subseteq \MIPzero$.
\end{corollary}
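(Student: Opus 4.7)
The plan is to derive this inclusion as an essentially immediate consequence of Theorem \ref{mainThm} together with Claim \ref{bounded}. Fix an arbitrary $L \in \Pi_2^0$, and choose a total Turing machine $R$ witnessing $L = \{x : \forall m\,\exists n\ R(x,m,n) = 1\}$. I would run the construction preceding Theorem \ref{mainThm} to obtain the $2$-input, $3$-prover verifier $V$, and then define a $1$-input, $3$-prover verifier $V_L$ that on input $x$ hardwires the complexity parameter to $m=1$ and simulates $V(1, x)$. To exhibit $L$ as a member of $\MIPzero$, I then check the four requirements of $\MIPstar[m, r, 1, 1]$ for $V_L$.

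The number of provers ($r = 3$) and the number of communication phases (a constant inherited from $V$) are fine, so the only nontrivial structural condition is the polynomial time bound. This follows from Claim \ref{bounded}: since $V$ is $\lambda$-bounded for some $\lambda \in \N$, we have $\mathsf{TIME}(V_L(x)) = \mathsf{TIME}(V(1, x)) \leq \lambda |x|^\lambda$, which is polynomial in $|x|$. The completeness and soundness conditions are exactly the content of Theorem \ref{mainThm}: if $x \in L$ then $\omega_q(V_L(x)) = \omega_q(V(1, x)) = 1$, and if $x \notin L$ then $\omega_q(V(1, x)) < 1$, matching the $(c, s) = (1, 1)$ regime that defines $\MIPzero$.

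Since all substantive work is already packaged into Theorem \ref{mainThm} and Claim \ref{bounded}, I do not anticipate any real obstacle. The only minor bookkeeping is confirming that hardwiring the first input does not change the game's value and that the resulting object still fits the formal definition of a verifier from Section~2.2, both of which are routine. In short, the corollary is obtained by specializing $m = 1$ in the main theorem and observing that the resulting verifier lies in the syntactic class $\MIPstar[m, 3, 1, 1]$.
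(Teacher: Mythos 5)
Your proposal is correct and matches the paper's own proof essentially verbatim: both define a $1$-input $3$-prover verifier that hardwires the complexity parameter to $1$ and runs $V(1,x)$, then invoke Claim~\ref{bounded} for the polynomial time bound and Theorem~\ref{mainThm} for the value condition. No further comment is needed.
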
 
\begin{proof}
Let $L \in \Pi^0_2$ then $L = \{x\in \{0,1\}^\ast : \forall m\ldotp \exists n\ldotp R(x,m,n)=1 \}$. Let $U$ be the $1$-input $3$-prover verifier, that on input $x$ executes the verifier $V(1,x)$ where $x \in \{0,1\}^\ast$. By Claim \ref{bounded}, $\mathsf{TIME}(U(x)) = \mathsf{TIME}(V(1,x)) \leq \lambda(1+|x|)^\lambda$ and by Theorem \ref{mainThm}, $x \in L$ iff $\omega^\ast(U(x)) = 1$. Thus $U$ is an $\MIPzero$ protocol for the language $L$, and $L \in \MIPzero$.
\end{proof}

This concludes the proof of the main result of this paper.
 \section{$\MIPstar = \RE$ implies gap-preserving compression}
\label{sec:gap}

As mentioned in the introduction, the key to proving $\MIPstar = \RE$ in~\cite{ji2020mip} was establishing a gap-preserving compression theorem for non-local games. Here we observe that the reverse holds: $\MIPstar = \RE$ implies a gap-preserving compression theorem. 

\begin{theorem} If $\MIP^* = \RE$, then there exists a Turing machine $\COMPRESS$, with the following properties. Given as input a $k$-input $r$-prover verifier $V$, $\COMPRESS$ outputs a $k$-input $2$-prover verifier $V^\#$ in time polynomial in the description length of $V$, with the following properties:
\begin{enumerate}
    \item if $\omega_q(V(x_1, ...,x_k)) = 1$ then $\omega_q(V^\#(x_1, ...,x_k)) = 1$
    \item if $\omega_q(V(x_1, ...,x_k)) \leq \frac{1}{2}$ then $\omega_q(V^\#(x_1, ...,x_k)) \leq \frac{1}{2}$
    \item The runtime of the verifier $V^\#$ is polynomial in the description length of $V$ and its input.
\end{enumerate}

\end{theorem}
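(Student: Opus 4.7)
The plan is to reduce the question ``does $V$ accept with probability $1$?'' to the Halting problem and then apply Theorem~\ref{HALT} to obtain a $2$-prover verifier with polynomial runtime. The key idea is to invoke $\MIPstar = \RE$ not on the language $\{(V, x_1, \ldots, x_k) : \omega_q(V(x_1, \ldots, x_k)) = 1\}$ (which is $\Pi_2^0$ and not $\RE$, as Theorem~\ref{upperbound} shows) but instead on the slightly larger language
\[
L = \{(V, x_1, \ldots, x_k) : \omega_q(V(x_1, \ldots, x_k)) > 1/2\},
\]
which still separates the promise correctly: if $\omega_q(V) = 1$ then $(V, x_1, \ldots, x_k) \in L$, and if $\omega_q(V) \leq 1/2$ then $(V, x_1, \ldots, x_k) \notin L$.

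I would first verify that $L$ is $\RE$ by constructing a uniform family of Turing machines $M_{V, x_1, \ldots, x_k}$ that semi-decide it. The machine enumerates finite-dimensional strategies $\mathcal{S}_1, \mathcal{S}_2, \ldots$ with rational-valued POVMs and states (as in the $\epsilon$-nets $\mathscr{S}_\epsilon$ used in the proof of Theorem~\ref{upperbound}), computes $\omega_q(V(x_1, \ldots, x_k), \mathcal{S}_i)$ exactly (a finite rational-arithmetic computation, since $V$ halts and the strategy is finite-dimensional with rational data), and halts as soon as some value exceeds $1/2$. By density of rational strategies inside the set of all strategies, this procedure halts iff $\omega_q(V(x_1, \ldots, x_k)) > 1/2$, so $L \in \RE$. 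Crucially, the description length $|M_{V, x_1, \ldots, x_k}|$ is polynomial in $|V| + |x_1| + \cdots + |x_k|$, because $M$ is a fixed driver with $V$ and the inputs hardcoded.

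The compression is now defined as follows. Given $V$, the Turing machine $\COMPRESS$ outputs a description of the $k$-input $2$-prover verifier $V^\#$ which, on input $(x_1, \ldots, x_k)$, builds the description of $M_{V, x_1, \ldots, x_k}$, applies the reduction $H$ from Theorem~\ref{HALT} to obtain the $0$-input $2$-prover verifier $V_{\HALT, M_{V, x_1, \ldots, x_k}}$, and then simulates that verifier's protocol. The runtime bound from Theorem~\ref{HALT} immediately gives $\mathsf{TIME}(V^\#(x_1, \ldots, x_k)) \leq C_{\HALT} |M_{V, x_1, \ldots, x_k}|^{C_{\HALT}}$, which is polynomial in $|V| + |x_1| + \cdots + |x_k|$, and the time to produce $V^\#$ from $V$ is polynomial in $|V|$. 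Correctness then follows directly from Theorem~\ref{HALT}: $M_{V, x_1, \ldots, x_k}$ halts iff $\omega_q(V(x_1, \ldots, x_k)) > 1/2$, so $\omega_q(V^\#(x_1, \ldots, x_k)) = 1$ in the completeness case and $\omega_q(V^\#(x_1, \ldots, x_k)) \leq 1/2$ in the soundness case.

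The main conceptual step is choosing the right $\RE$ language, one slightly offset from the actual value-$1$ question that nonetheless respects the $1$ versus $1/2$ promise gap. The only potentially delicate point is confirming that the winning probability of a single fixed finite-dimensional rational strategy against an arbitrary verifier $V$ is exactly computable, but this is routine since verifiers are required to halt on each computation phase and the state evolution under a fixed strategy amounts to finitely many finite-dimensional matrix operations per round. Everything after the choice of $L$ is bookkeeping around Theorem~\ref{HALT}.
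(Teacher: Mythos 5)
Your proposal is correct and is essentially the paper's own proof: the paper likewise defines a Turing machine $T_{V,(x_1,\ldots,x_k)}$ that enumerates finite-dimensional strategies (via $\epsilon$-nets with $\epsilon=1/4$) and halts upon finding one winning with probability greater than $1/2$, then lets $V^\#$ run the verifier $V_{\HALT,T_{V,(x_1,\ldots,x_k)}}$ produced by Theorem~\ref{HALT}. Your framing via the intermediate $\RE$ language $\{(V,x_1,\ldots,x_k) : \omega_q(V(x_1,\ldots,x_k)) > 1/2\}$ is just a restatement of the same reduction.
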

\begin{proof}
$\COMPRESS$ is the Turing machine that, when given input a verifier $V$, it returns the description of the verifier $V^\#$ from Figure \ref{fig:verifier}.

In the description of $V^\#$, we refer to the Turing machine $T_{V, (x_1, ...,x_k)}$. For every $k$-input $r$-prover verifier $V$ and $x_1,\ldots,x_k\in \{0,1\}^*$, $T_{V, (x_1, ...,x_k)}$ is the Turing machine that on empty tape enumerates over finite-dimensional quantum strategies for $V(x_1, ...,x_k)$ and only accepts if it finds a strategy that wins the game with probability greater than $\frac{1}{2}$. It does this via enumerating over $\epsilon$-nets (for $\epsilon = \frac{1}{4}$) for strategies of dimension $d$ for all $d \in \N$, as with the proof of Theorem \ref{upperbound}.

\begin{figure}[!htb]
  \centering
  \begin{gamespec}
    Input: $(x_1, ...,x_k)$, where $x_1,\ldots,x_k\in \{0,1\}^*$ \\
    Perform the following steps:
    
    \begin{enumerate}
        \item Compute $V_{\HALT,T_{V, (x_1, ...,x_k)}} = H(T_{V, (x_1, ...,x_k)})$ (where $H$ is from Theorem \ref{HALT}).
        \item Execute the interactive protocol specified by the verifier $V_{\HALT,T_{V, (x_1, ...,x_k)}}$ and accept if and only if the verifier accepts.
    \end{enumerate}

  \end{gamespec}
  \caption{Specification of the compressed verifier $V^\#$}
  \label{fig:verifier}
\end{figure}

By Theorem \ref{HALT}, if the Turing machine $T_{V, (x_1, ...,x_k)}$ halts then $$\omega_q(V_{\HALT,T_{V, (x_1, ...,x_k)}}) = 1,$$ otherwise $\omega_q(V_{\HALT,T_{V, (x_1, ...,x_k)}}) \leq \frac{1}{2}$. Also the runtime of $V_{\HALT,T_{V, (x_1, ...,x_k)}}$ is $p(|V|+ |x_1| +... + |x_n|)$, for some polynomial $p$.

Then if $\omega_q(V(x_1, ...,x_k)) = 1$ the Turing machine $T_{V, (x_1, ...,x_k)}$ finds a strategy that wins with probability greater than $ \frac{3}{4}$ and halts. Therefore $$\omega_q(V^\#(x_1, ...,x_k)) = \omega_q(V_{\HALT,T_{V, (x_1, ...,x_k)}}) = 1.$$

Otherwise, if $\omega_q(V(x_1, ...,x_k)) \leq \frac{1}{2}$ then there is no strategy that wins the game with probability $\frac{1}{2}$ and the Turing machine $T_{V, (x_1, ...,x_k)}$ never halts. Therefore $$\omega_q(V^\#(x_1, ...,x_k)) = \omega_q(V_{\HALT,T_{V, (x_1, ...,x_k)}}) \leq \frac{1}{2}.$$
\end{proof}

Note that in this gap-preserving compression theorem, the time complexity of the verifier $V^\#$ is polynomial in the \emph{description length} of $V$ and its input -- rather than the \emph{time complexity} of $V$.  
\bibliographystyle{alpha}
\bibliography{bibliography.bib}

\end{document}